  \providecommand\BibTeX{{%
    \normalfont B\kern-0.5em{\scshape i\kern-0.25em b}\kern-0.8em\TeX}}}
  \newcommand\figcaption{\def\@captype{figure}\caption}
  \newcommand\tabcaption{\def\@captype{table}\caption}
\newcommand{\HBS}[1]{#1}
\newcommand{\YC}[1]{#1}
\newcommand{\YCL}[1]{#1}
\newcommand{\QY}[1]{#1}
\newcommand{\NQY}[1]{#1}
\newcommand{\xxx}[1]{#1\xspace}
\newcommand{\sol}{{\tt DBL}\xspace}
\newcommand{\solp}{{\tt DBL-P}\xspace}
\newcommand{\solg}{{\tt DBL-G}\xspace}
\newcommand{\dl}{{\tt DL}\xspace}
\newcommand{\bl}{{\tt BL}\xspace}
\newcommand{\bfs}{{\tt BFS}\xspace}
\newcommand{\bbfs}{{\tt B-BFS}\xspace}
\newcommand{\dfs}{{\tt DFS}\xspace}
\newcommand{\dagg}{{\tt DAG}\xspace}
\newcommand{\scc}{{\tt SCC}\xspace}
\newcommand{\ip}{{\tt IP}\xspace}
\newcommand{\ipp}{{\tt IP-P}\xspace}
\newcommand{\dgr}{{\tt DAGGER}\xspace}
\newcommand{\tol}{{\tt TOL}\xspace}
\newcommand{\uconstruct}{{\tt CONSTRUCT}\xspace}
\begin{document}

\title{DBL: Efficient Reachability Queries on Dynamic Graphs (Complete Version)}


\author{Qiuyi Lyu\inst{1} \and
Yuchen Li\inst{2}\and
Bingsheng He\inst{3}\and
Bin Gong\inst{4}}
\authorrunning{Q. Lyu et al.}
%
\institute{Shandong University
\email{qiuyilv@gmail.com}\\\and
Singapore Management University
\email{yuchenli@smu.edu.sg}\\\and
National University of Singapore
\email{hebs@comp.nus.edu.sg}\\\and
Shandong University
\email{gb@sdu.edu.cn}
}

\maketitle

\begin{abstract}

Reachability query is a fundamental problem on graphs, which has been extensively studied in academia and industry. Since graphs are subject to frequent updates in many applications, it is essential to support efficient graph updates while offering good performance in reachability queries.
Existing solutions compress the original graph with the Directed Acyclic Graph (\dagg) and propose efficient query processing and index update techniques. However, they focus on optimizing the scenarios where the Strong Connected Components (\scc{s}) remain unchanged and have overlooked the prohibitively high cost of the \dagg maintenance when \scc{s} are updated. In this paper, we propose \sol, an efficient \dagg-free index to support the reachability query on dynamic graphs with insertion-only updates. \sol builds on two complementary indexes: Dynamic Landmark (\dl) label and Bidirectional Leaf (\bl) label. The former leverages landmark nodes to quickly determine reachable pairs whereas the latter prunes unreachable pairs by indexing the leaf nodes in the graph. We evaluate \sol against the state-of-the-art approaches on dynamic reachability index with extensive experiments on real-world datasets. The results have demonstrated that \sol achieves orders of magnitude speedup in terms of index update, while still producing competitive query efficiency.

\end{abstract}

\section{Introduction}\label{sec:int}
Given a graph $G$ and a pair of vertices $u$ and $v$,
reachability query (denoted as $q(u,v)$) is a fundamental graph operation that answers whether there exists
a path from $u$ to $v$ on $G$. This operation is a core component in supporting numerous applications in practice, such as those in social networks, biological complexes, knowledge graphs, and transportation networks.
A plethora of index-based approaches have been developed over a decade \cite{wei2018reachability,cohen2003reachability,wang2006dual,schenkel2005efficient,yildirim2010grail,seufert2013ferrari,su2017reachability,yildirim2013dagger,valstar2017landmark,hotz2019experiences} and
demonstrated great success in handling reachability query on \emph{static} graphs with millions of vertices and edges.
However, in many cases, graphs are highly \emph{dynamic}~\cite{wang2017real}: New friendships continuously form on social networks like Facebook and Twitter;
knowledge graphs are constantly updated with new entities and relations;
and transportation networks are subject to changes when road constructions and temporary traffic controls occur. In those applications, it is essential to support efficient graph updates while offering good performance in reachability queries.

There have been some efforts in developing reachability index to support graph updates \cite{bramandia2010incremental,demetrescu2001fully,henzinger1995fully,jin2011path,roditty2013decremental,roditty2016fully,schenkel2005efficient}.
However, there is a major assumption made in those works:
\ul{the Strongly Connected Components (\scc{s}) in the underlying graph remain unchanged after the graph gets updated.}
The Directed Acyclic Graph (\dagg) collapses the \scc{s} into vertices and the reachability query is then processed on a significantly smaller graph than the original.
The state-of-the-art solutions \cite{zhu2014reachability,wei2018reachability} thus rely on the \dagg to design an index for efficient query processing, yet their index maintenance mechanisms only support the update which does not trigger \scc merge/split in the \dagg. However, such an assumption can be invalid in practice, as edge insertions could lead to updates of the \scc{s} in the \dagg. In other words, the overhead of the \dagg maintenance has been mostly overlooked in the previous studies.

One potential solution is to adopt existing \dagg maintenance algorithms such as \cite{yildirim2013dagger}. Unfortunately, this \dagg maintenance is a prohibitively time-consuming process, as also demonstrated in the experiments. For instance, in our experiments, the time taken to update the \dagg on one edge insertion in the LiveJournal dataset is two-fold more than the time taken to process \emph{1 million queries} for the state-of-the-art methods. Therefore, we need a new index scheme with a low maintenance cost while efficiently answering reachability queries.

In this paper, we propose a \dagg-free dynamic reachability index framework(\sol) that enables efficient index update and supports fast query processing at the same time on large scale graphs. We focus on insert-only dynamic graphs with new edges and vertices continuously added. This is because the number of deletions are often significantly smaller than the number of insertions, and deletions are handled with lazy updates in many graph applications~\cite{akiba2014dynamic,boccaletti2006complex}. Instead of maintaining the \dagg, we index the reachability information around two sets of vertices: the ``landmark'' nodes with high centrality and the ``leaf'' nodes with low centrality (e.g., nodes with zero in-degree or out-degree).
As the reachability information of the landmark nodes and the leaf nodes remain relatively stable against graph updates, it enables efficient index update opportunities compared with approaches using the \dagg. Hence, \sol is built on the top of two simple and effective index components: (1) a Dynamic Landmark (\dl) label, and (2) a Bidirectional Leaf (\bl) label. Combining \dl and \bl in the \sol ensures efficient index maintenance while achieves competitive query processing performance.

\emph{Efficient query processing:}
%
\dl is inspired by the landmark index approach \cite{cohen2003reachability}.
The proposed \dl label maintains a small set of the landmark nodes as the label for each vertex in the graph. Given a query $q(u,v)$, if both the \dl labels of $u$ and $v$ contain a common landmark node, we can immediately determine that $u$ reaches $v$. Otherwise, we need to invoke Breadth-First Search(\bfs) to process $q(u,v)$. We devise \bl label to quickly prune vertex pairs that are not reachable to limit the number of costly \bfs. \bl complements \dl and it focuses on building labels around the leaf nodes in the graph. The leaf nodes form an exclusive set apart from the landmark node set. \bl label of a vertex $u$ is defined to be the leaf nodes which can either reach $u$ or $u$ can reach them. Hence, $u$ does not reach $v$ if there exists one leaf node in $u$'s \bl label which does not appear in the \bl label of $v$. \YC{In summary, \dl can quickly determine reachable pairs while \bl, which complements \dl, prunes disconnected pairs to remedy the ones that cannot be immediately determined by \dl.}

\emph{Efficient index maintenance:} Both \dl and \bl labels are lightweight indexes where each vertex only stores a constant size label. When new edges are inserted, efficient pruned \bfs is employed and only the vertices where their labels need update will be visited. In particular, once the label of a vertex is unaffected by the edge updates, we safely prune the vertex as well as its descendants from the \bfs, which enables efficient index update.

To better utilize the computation power of modern architectures, we implement \dl and \bl with simple and compact bitwise operations. Our implementations are based on OpenMP and CUDA in order to exploit parallel architectures multi-core CPUs and GPUs (Graphics Processing Units), respectively.

Hereby, we summarize the contributions as the following:

\begin{itemize}[leftmargin=1em]
	\setlength\itemsep{0em}
	\item We introduce the \sol framework which combines two complementary \dl and \bl labels to enable efficient reachability query processing on large graphs.
	\item We propose novel index update algorithms for \dl and \bl. To the best of our knowledge, this is the first solution for dynamic reachability index without maintaining the \dagg. In addition, the algorithms can be easily implemented with parallel interfaces.
	\item We conduct extensive experiments to validate the performance of \sol in comparison with the state-of-the-art dynamic methods~\cite{zhu2014reachability,wei2018reachability}. \sol achieves competitive query performance and orders of magnitude speedup for index update. We also implement \sol on multi-cores and GPU-enabled system and demonstrate significant performance boost compared with our sequential implementation.
\end{itemize}

The remaining part of this paper is organized as follows. Section~\ref{sec:pb} presents the preliminaries and background. Section~\ref{sec:relatedwork} presents the related work. Section~\ref{sec:dbl} presents the index definition as well as query processing.
Sections~\ref{sec:lan} demonstrate the update mechanism of \dl and \bl labels.
Section~\ref{sec:exp} reports the experimental results. Finally, we conclude the paper in Section~\ref{sec:con}.

\begin{table}[t]
    \scriptsize
	\centering
	\caption{Common notations in this paper}
	\vspace{-1em}
	\label{tbl:notations}
    \begin{center}
	\begin{tabular}{|c|p{300pt}|}
		\hline
		\textbf{Notation} & \textbf{Description} \\ \hline
		$G(V,E)$	& the vertex set $V$ and the edge set $E$ of a directed graph $G$  \\ \hline
		$G'$	& the reverse graph of $G$  \\ \hline
		$n$ & the number of vertex in $G$ \\ \hline
		$m$ & the number of edges in $G$ \\ \hline
		$Suc(u)$ & the set of $u$'s out-neighbors \\ \hline
        $Pre(u)$ & the set of $u$'s in-neighbors \\ \hline
		$Des(u)$ & the set of $u$'s descendants including $u$ \\ \hline
		$Anc(u)$ & the set of $u$'s ancestors including $u$ \\ \hline
		$Path(u,v)$ & A path from vertex $u$ to vertex $v$ \\ \hline
		$q(u,v)$ & the reachability query from $u$ to $v$ \\ \hline
		$k$ & the size of \dl label for one vertex \\ \hline
		$k'$ & the size of \bl label for one vertex \\ \hline
		$\dl_{in}(u)$ & the label that keeps all the landmark nodes that could reach $u$\\ \hline
		$\dl_{out}(u)$ & the label that keeps all the landmark nodes that could be reached by $u$\\ \hline
		$\bl_{in}(u)$ & the label that keeps the hash value of the leaf nodes that could reach $u$ \\ \hline
		$\bl_{out}(u)$ & the label that keeps the hash value of the leaf nodes that could be reached by $u$\\ \hline
		$h(u)$ & the hash function that hash node $u$ to a value\\ \hline
	\end{tabular}
\end{center}
\end{table}

\section{Preliminaries}\label{sec:pb}


A directed graph is defined as $G=(V,E)$, where $V$ is the vertex set and $E$ is the edge set with $n=|V|$ and $m=|E|$. 
We denote an edge from vertex $u$ to vertex $v$ as $(u,v)$.
A path from $u$ to $v$ in $G$ is denoted as $Path(u,v)=(u,w_1,w_2,w_3,\ldots,v)$ where $w_i \in V$ and the adjacent vertices on the path are connected by an edge in $G$. We say that $v$ is reachable by $u$ when there exists a $Path(u,v)$ in $G$. In addition, we use $Suc(u)$ to denote the direct successors of $u$ and the direct predecessors of $u$ are denoted as $Pre(u)$. Similarly, we denote all the ancestors of $u$ (including $u$) as $Anc(u)$ and all the descendants of $u$ (including $u$) as $Des(u)$.
We denote the reversed graph of $G$ as $G'=(V,E')$ where all the edges of $G$ are in the opposite direction of $G'$. In this paper, the forward direction refers to traversing on the edges in $G$. Symmetrically, the backward direction refers to traversing on the edges in $G'$.
We denote $q(u,v)$ as a reachability query from $u$ to $v$. In this paper, we study the dynamic scenario where edges can be inserted into the graph. Common notations are summarized in Table~\ref{tbl:notations}.

\section{Related Work}\label{sec:relatedwork}

There have been some studies on dynamic graph \cite{bramandia2010incremental,demetrescu2001fully,henzinger1995fully,jin2011path,roditty2013decremental,roditty2016fully,schenkel2005efficient}. Yildirim et al. propose \dgr \cite{yildirim2013dagger} which maintains the graph as a \dagg after insertions and deletions.
The index is constructed on the \dagg to facilitate reachability query processing.
The main operation for the \dagg maintenance is the merge and split of the \emph{Strongly Connected Component} (\scc).
Unfortunately, it has been shown that \dgr exhibits unsatisfactory query processing performance on handling large graphs (even with just millions of vertices \cite{zhu2014reachability}).

The state-of-the-art approaches: \tol \cite{zhu2014reachability} and \ip \cite{wei2018reachability} follow the maintenance method for the \dagg from \dgr and propose novel dynamic index on the \dagg to improve the query processing performance.
We note that \tol and \ip are only applicable to the scenarios where the \scc/s in the \dagg remains unchanged against updates. In the case of \scc merges/collapses, \dgr is still required to recover the \scc/s. For instance, \tol and \ip can handle edge insertions $(v_1,v_5)$ in Figure~\ref{fig:graph}, without invoking \dgr. However, when inserting $(v_9,v_2)$, two SCC/s $\{v_2\}$ and $\{v_5, v_6, v_9\}$ will be
merged into one larger SCC $\{v_2, v_5, v_6, v_9\}$. For such cases, \tol and \ip rely on \dgr for maintaining the \dagg first and then perform their respective methods for index maintenance and query processing.
However, the overheads of the \scc maintenance are excluded in their experiments \cite{zhu2014reachability,wei2018reachability} and such overheads is in fact non-negligible \cite{yildirim2013dagger,cycledetection}.

In this paper, we propose the \sol framework which only maintains the labels for all vertices in the graph without constructing the \dagg. That means, \sol can effectively avoid the costly \dagg maintenance upon graph updates.
\sol achieves competitive query processing performance with the state-of-the-art solutions (i.e., \tol and \ip) while offering orders of magnitude speedup in terms of index updates.

\begin{figure*}[t]
	\subfigure[Graph $G$]{
		\begin{minipage}{0.21\linewidth}
			\label{fig:graph}
			\includegraphics[height=3.6cm]{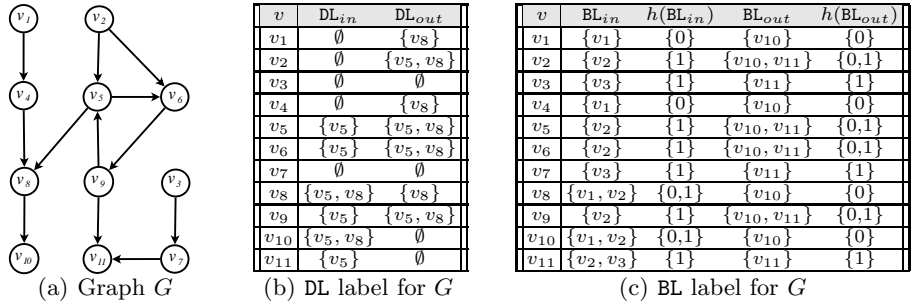}
		\end{minipage}
	}
	\hfill
	\subfigure[\dl label for $G$]{
		\begin{minipage}{0.23\linewidth}
			\centering
            \scriptsize
			\begin{tabular}{||c|cc||}
				\hline
				\rowcolor[gray]{.9}	
				\hline
				$v$	&	$\dl_{in}$	&	$\dl_{out}$	\\ \hline
				$v_1$	&	$\emptyset$	&	$\{v_8\}$				\\ \hline
				$v_2$	&	$\emptyset$	&	$\{v_5,v_8\}$				\\ \hline
				$v_3$	&	$\emptyset$ &	$\emptyset$		\\ \hline
				$v_4$	&	$\emptyset$	&	$\{v_8\}$				\\ \hline
				$v_5$	&	$\{v_5\}$			&	$\{v_5,v_8\}$				\\ \hline
				$v_6$	&	$\{v_5\}$			&	$\{v_5,v_8\}$				\\ \hline
				$v_7$	&	$\emptyset$	&	$\emptyset$		\\ \hline
				$v_8$	&	$\{v_5,v_8\}$		&	$\{v_8\}$				\\ \hline
				$v_9$	&	$\{v_5\}$			&	$\{v_5,v_8\}$				\\ \hline
				$v_{10}$	&	$\{v_5,v_8\}$		&	$\emptyset$		\\ \hline
				$v_{11}$	&	$\{v_5\}$			&	$\emptyset$		\\ \hline
			\end{tabular}
			\label{fig:dl-label}
		\end{minipage}
	}
	\hfill
	\subfigure[\bl label for $G$]{
		\begin{minipage}{0.43\linewidth}
			\centering
            \scriptsize
			\begin{tabular}{||c|cccc||}
				\hline
				\rowcolor[gray]{.9}
				\hline
				$v$	&	$\bl_{in}$	&   $h(\bl_{in})$    &	$\bl_{out}$	    &   $h(\bl_{out})$\\ \hline
				$v_1$	&	$\{v_1\}$   &	\{0\}		&	$\{v_{10}\}$		&	 \{0\}      \\ \hline
				$v_2$	&	$\{v_2\}$	&	\{1\}    	&	$\{v_{10},v_{11}\}$	&	 \{0,1\}   \\ \hline
				$v_3$	&	$\{v_3\}$ 	&	\{1\}    	&	$\{v_{11}\}$		&	 \{1\}      \\ \hline
				$v_4$	&	$\{v_1\}$	&	\{0\}    	&	$\{v_{10}\}$		&    \{0\}      \\ \hline
				$v_5$	&	$\{v_2\}$	&	\{1\}    	&	$\{v_{10},v_{11}\}$	&	 \{0,1\}    \\ \hline
				$v_6$	&	$\{v_2\}$	&	\{1\}    	&	$\{v_{10},v_{11}\}$	&    \{0,1\}	\\ \hline
				$v_7$	&	$\{v_3\}$   &	\{1\}		&	$\{v_{11}\}$	&	 \{1\}	    \\ \hline
				$v_8$	&	$\{v_1,v_2\}$	&	\{0,1\}	    &	$\{v_{10}\}$	&	 \{0\}	    \\ \hline
				$v_9$	&	$\{v_2\}$	&	\{1\}	    &	$\{v_{10},v_{11}\}$	&	\{0,1\} \\ \hline
				$v_{10}$	&	$\{v_1,v_2\}$	&   \{0,1\}     &	$\{v_{10}\}$	&	\{0\}	\\ \hline
				$v_{11}$	&	$\{v_2,v_3\}$	&   \{1\}	&	$\{v_{11}\}$	&	\{1\}	\\ \hline
			\end{tabular}
			\label{fig:bl-label}
		\end{minipage}
	}

	\caption{A running example of graph $G$}

\end{figure*}

\section{DBL Framework}\label{sec:dbl}

The \sol framework is consist of \dl and \bl label which have their independent query and update components. In this section, we introduce the \dl and \bl label. Then, we devise the query processing algorithm that builds upon \sol index.

\subsection{Definitions and Construction}\label{sec:dbl:def}
We propose the \sol framework that consists of two index components: \dl and \bl.
\begin{definition}[\dl label]
Given a landmark vertex set $L \subset V$ and $|L|=k$, we define two labels for each vertex $v \in V$: $\dl_{in}(v)$ and $\dl_{out}(v)$.
$\dl_{in}(v)$ is a subset of nodes in $L$ that could reach $v$ and $\dl_{out}(v)$ is a subset of nodes in $L$ that $v$ could reach.
\end{definition}


It is noted that \dl label is a subset of the 2-hop label \cite{cohen2003reachability}. In fact, 2-Hop label is a special case for \dl label when the landmark set $L = V$. Nevertheless, we find that maintaining 2-Hop label in the dynamic graph scenario leads to index explosion. Thus, we propose to only choose a subset of vertices as the landmark set $L$ to index \dl label. In this way, \dl label has up to $O(n|L|)$ space complexity and the index size can be easily controlled by tunning the selection of $L$. The following lemma shows an important property of \dl label for reachability query processing.

\begin{lemma}
Given two vertices $u$,$v$ and their corresponding \dl label, $\dl_{out}(u) \cap \dl_{in}(v) \neq \emptyset$ deduces $u$ reaches $v$ but not vice versa.
\end{lemma}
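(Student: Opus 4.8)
The plan is to prove both directions of the claim: first the positive implication, that $\dl_{out}(u) \cap \dl_{in}(v) \neq \emptyset$ implies $u$ reaches $v$; and second, that the converse fails, which only requires exhibiting a counterexample rather than a general argument.

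For the forward direction, suppose $\dl_{out}(u) \cap \dl_{in}(v) \neq \emptyset$. Then there exists some landmark vertex $w \in L$ lying in both sets. By the definition of $\dl_{out}(u)$, the node $w$ is reachable from $u$, so there is a path $Path(u,w)$ in $G$. By the definition of $\dl_{in}(v)$, the node $w$ can reach $v$, so there is a path $Path(w,v)$ in $G$. Concatenating these two paths at $w$ yields a walk from $u$ to $v$, and hence (after removing any cycles) a path $Path(u,v)$ in $G$. Therefore $v$ is reachable from $u$. This step is entirely routine — it is just transitivity of reachability combined with unfolding the label definitions.

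For the "not vice versa" part, I would point to the running example in Figure~\ref{fig:graph} together with the \dl label table in Figure~\ref{fig:dl-label}. We need a pair $u, v$ with $u$ reaching $v$ but $\dl_{out}(u) \cap \dl_{in}(v) = \emptyset$. A natural candidate is a pair of non-landmark leaf-type nodes connected by a short path that avoids every landmark: for instance $v_3$ and $v_7$, where $v_3$ reaches $v_7$ yet $\dl_{out}(v_3) = \emptyset$, so the intersection with $\dl_{in}(v_7)$ is trivially empty. (Any pair where the source has empty $\dl_{out}$ or the target has empty $\dl_{in}$ but reachability still holds works equally well.) This shows the converse implication is false and completes the proof.

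The only mild subtlety — hardly an obstacle — is making sure the concatenated walk argument is stated cleanly: a concatenation of two paths need not itself be a simple path, so I would phrase it in terms of the existence of a walk implying the existence of a path, or equivalently just invoke transitivity of the reachability relation directly, since $u \rightsquigarrow w$ and $w \rightsquigarrow v$ give $u \rightsquigarrow v$. Everything else is immediate from the definitions, so the real content of the lemma is simply recording that \dl labels are \emph{sound} for positive reachability answers but not \emph{complete} — which is exactly why the \bl label is later introduced to help prune the unreachable pairs that \dl cannot certify.
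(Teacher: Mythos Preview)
Your proposal is correct and matches the paper's approach: the paper does not give a formal proof of this lemma but immediately follows it with an illustrative example conveying exactly your two points --- transitivity through a common landmark for the forward direction, and the pair $(v_3,v_{11})$ (you chose $(v_3,v_7)$, which works equally well) as a witness that empty intersection does not imply unreachability.
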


\begin{example}
We show an running example in Figure~\ref{fig:graph}. Assuming the landmark set is chosen as $\{v_5,v_8\}$, the corresponding \dl label is shown in Figure~\ref{fig:dl-label}. $q(v_1,v_{10})$ returns true since $\dl_{out}(v_1) \cap \dl_{in}(v_{10}) = \{v_8\}$.
However, the labels cannot give negative answer to $q(v_3,v_{11})$ despite
$\dl_{out}(v_{3}) \cap \dl_{in}(v_{11}) = \emptyset$.
This is because the intermediate vertex $v_7$ on the path from $v_3$ to $v_{11}$ is not included in the landmark set.	
\end{example}

To achieve good query processing performance, we need to select a set of vertices as the landmarks such that they cover most of the reachable vertex pairs in the graph, i.e., $\dl_{out}(u) \cap \dl_{in}(v)$ contains at least one landmark node for any reachable vertex pair $u$ and $v$. The optimal landmark selection has been proved to be NP-hard \cite{potamias2009fast}. In this paper, we adopt a heuristic method for selecting \dl label nodes following existing works  \cite{akiba2013fast,potamias2009fast}. In particular, we rank vertices with $M(u) = |Pre(u)| \cdot |Suc(u)|$ to approximate their centrality and select top-$k$ vertices. \QY{Other landmark selection methods are also discussed in Section~\ref{exp:selection}.}


\begin{definition}[\bl label]\label{def:bl}
\bl introduces two labels for each vertex $v \in V$: $\bl_{in}(v)$ and $\bl_{out}(v)$.
$\bl_{in}(v)$ contains all the zero in-degrees vertices that can reach $v$,
and $\bl_{out}(v)$ contains all the zero out-degrees vertices that could be reached by $v$.
For convenience, we refer to vertices with either zero in-degree or out-degree as the leaf nodes.
\end{definition}
\begin{lemma}\label{cor:bl}
Given two vertices $u$,$v$ and their corresponding \bl label, $u$ does not reach $v$ in $G$ if $\bl_{out}(v) \not\subseteq \bl_{out}(u)$ or $\bl_{in}(u) \not\subseteq \bl_{in}(v)$.
\end{lemma}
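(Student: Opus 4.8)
The plan is to prove the contrapositive: assuming $u$ reaches $v$ in $G$, I would show that both $\bl_{out}(v) \subseteq \bl_{out}(u)$ and $\bl_{in}(u) \subseteq \bl_{in}(v)$ hold; negating this conjunction gives exactly the stated condition for non-reachability. The only tool needed is transitivity of reachability: concatenating a $Path(u,v)$ with a $Path(v,w)$ yields a $Path(u,w)$, so whenever $u$ reaches $v$ we have $Des(v) \subseteq Des(u)$ and, symmetrically, $Anc(u) \subseteq Anc(v)$.

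For the first containment, I would take an arbitrary leaf node $\ell \in \bl_{out}(v)$. By Definition~\ref{def:bl}, $\ell$ has zero out-degree and is reachable from $v$, i.e. $\ell \in Des(v)$. Since $u$ reaches $v$, $Des(v) \subseteq Des(u)$, hence $\ell \in Des(u)$, so $u$ reaches $\ell$; as $\ell$ still has zero out-degree, $\ell \in \bl_{out}(u)$. This gives $\bl_{out}(v) \subseteq \bl_{out}(u)$. For the second, take an arbitrary leaf node $\ell \in \bl_{in}(u)$: then $\ell$ has zero in-degree and reaches $u$, and since $u$ reaches $v$, $\ell$ reaches $v$ by transitivity, so $\ell \in \bl_{in}(v)$; thus $\bl_{in}(u) \subseteq \bl_{in}(v)$. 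Taking the contrapositive of ``$u$ reaches $v$ $\Rightarrow$ $\bl_{out}(v) \subseteq \bl_{out}(u)$ and $\bl_{in}(u) \subseteq \bl_{in}(v)$'' yields the lemma.

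I do not anticipate a genuine obstacle here, since the argument is just transitivity applied once in each direction; the points that merely need care are the boundary conventions — that a vertex reaches itself (so a zero-out-degree $v$ lies in its own $\bl_{out}(v)$, consistent with $v \in Des(v)$), and that the statement is phrased on the underlying leaf-node sets rather than their hash images $h(\bl_{in})$, $h(\bl_{out})$, so no collision analysis enters this proof. It is also worth noting, as with the \dl label, that the implication is one-directional: a reachable pair may share identical \bl labels (e.g. two vertices in the same strongly connected component), so the test is sound for detecting non-reachability but not complete, which is why in query processing \bl is used only as a pruning filter before invoking \bfs.
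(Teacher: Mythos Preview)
Your proof is correct and follows exactly the same approach as the paper: the paper justifies the lemma by the one-line contrapositive ``if $u$ could reach $v$, then $u$ could reach all the leaf nodes that $v$ could reach, and all the leaf nodes that reach $u$ should also reach $v$,'' which is precisely the transitivity argument you spell out in detail.
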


\bl label can give negative answer to $q(u,v)$. This is because if $u$ could reach $v$, then $u$ could reach all the leaf nodes that $v$ could reach, and all the leaf nodes that reach $u$ should also reach $v$. \dl label is efficient for giving positive answer to a reachability query whereas \bl label plays a complementary role by pruning unreachable pairs. In this paper, we take vertices with zero in-degree/out-degree as the leaf nodes. \QY{We also discuss other leaf selection methods in Section~\ref{exp:selection}.}

\begin{algorithm}[t]
	\caption{\dl label Batch Construction}
	\begin{algorithmic}[1]
		\Require
		Graph $G(V,E)$, Landmark Set $D$
		\Ensure
		$\dl$ label for $G$
		\For{$i=0$; $i < k$; $i$++ }
		\State //Forward \bfs  \label{dl:in1}
		\State $S \gets D[i]$ \label{algo:con:init}
		\State enqueue $S$ to an empty queue $Q$ \label{algo:con:start}
		\While{$Q$ not empty} \label{algo:con:include:start}
		\State $p \gets $ pop $Q$
		\For{$x \in Suc(p)$ }
		\State $\dl_{in}(x) \gets \dl_{in}(x) \cup \{S\}$;
		\State enqueue x to $Q$  \label{dl:in2}
		\EndFor
		\EndWhile \label{algo:con:include:end}
		\State //Symmetrical Backward \bfs is performed.\label{dl:out1}
		\EndFor
	\end{algorithmic}
	\label{algo:appendix:dl}
\end{algorithm}

\begin{example}
Figure~\ref{fig:bl-label} shows \bl label for the running example.
\bl label gives negative answer to $q(v_4,v_6)$ since $\bl_{in}(v_4)$ is not contained by $\bl_{in}(v_6)$.
Intuitively, vertex $v_1$ reaches vertex $v_4$ but cannot reach $v_6$ which indicates $v_4$ should not reach $v_6$.
\bl label cannot give positive answer. Take $q(v_5,v_2)$ for an example,
the labels satisfy the containment condition but positive answer cannot be given.
\end{example}

The number of \bl label nodes could be huge.
To develop efficient index operations, we build a hash set of size $k'$ for \bl as follows. Both $\bl_{in}$ and $\bl_{out}$ are a subset of $\{1, 2,\ldots, k'\}$ where $k'$ is a user-defined label size, and they are stored in bit vectors. A hash function is used to map the leaf nodes to a corresponding bit. For our example, the leaves are $\{v_1,v_2,v_3,v_{10},v_{11}\}$. When $k'=2$, all leaves are hashed to two unique values. Assume $h(v_1)=h(v_{10})=0$, $h(v_2)=h(v_3)=h(v_{11})=1$. We show the hashed \bl label set in Figure~\ref{fig:bl-label} which are denoted as $h(\bl_{in})$ and $h(\bl_{out})$. In the rest of the paper, we directly use $\bl_{in}$ and $\bl_{out}$ to denote the hash sets of the corresponding labels. It is noted that one can still use Lemma~\ref{cor:bl} to prune unreachable pairs with the hashed \bl label.

We briefly discuss the batch index construction of \sol as the focus of this work is on the dynamic scenario. The construction of \dl label is presented in Algorithm~\ref{algo:appendix:dl}, which follows existing works on 2-hop label~\cite{cohen2003reachability}.
For each landmark node $D[i]$, we start a \bfs from $S$ (Line~\ref{algo:con:start}) and include $S$ in $\dl_{in}$ label of every vertices that $S$ can reach (Lines~\ref{algo:con:include:start}-\ref{algo:con:include:end}). For constructing $\dl_{out}$, we execute a \bfs on the reversed graph $G'$ symmetrically (Line~\ref{dl:out1}). To construct \bl label, we simply replace the landmark set $D$ as the leaf set $D'$ and replace $S$ with all leaf nodes that are \emph{hashed} to bucket $i$ (Line~\ref{algo:con:init}) in Algorithm~\ref{algo:appendix:dl}.
The complexity of building \sol is that $O((k+k')(m+n))$.

Note that although we use~\cite{cohen2003reachability} for offline index construction, the contribution of our work is that we construct \dl and \bl as complementary indices for efficient query processing. Furthermore, we are the first work to support efficient dynamic reachability index maintenance without assuming \scc/s remain unchanged.



\vspace{1mm}
\noindent\textbf{Space complexity.}
The space complexities of \dl and \bl labels are $O(kn)$ and $O(k'n)$, respectively.

\begin{algorithm}[t]
	\caption{Query Processing Framework for \sol}
	\begin{algorithmic}[1]
		\Require
		Graph $G(V,E)$, \dl label, \bl label, $q(u,v)$
		\Ensure
		Answer of the query.
		\Function {$\dl\_Intersec$}{$x$,$y$}
		\State return ($\dl_{out}(x) \cap \dl_{in}(y)$);
		\EndFunction
		\Function {$\bl\_Contain$}{$x$,$y$}
		\State return ($\bl_{in}(x) \subseteq \bl_{in}(y)$ and $\bl_{out}(y) \subseteq \bl_{out}(x)$);
		\EndFunction
		
		\Procedure {Query}{$u$,$v$}
		\If{$\dl\_Intersec$($u$,$v$)}   \label{algo:framework:dl-label}
		\State return true;
		\EndIf
		\If{not $\bl\_Contain$($u$,$v$)}
		\State return false;    \label{algo:framework:bl-label}
		\EndIf
		\If{$\dl\_Intersec$($v$,$u$)}   \label{algo:framework:dl-label-1}
		\State return false;
		\EndIf
		\If{$\dl\_Intersec$($u$,$u$) or $\dl\_Intersec$($v$,$v$)}  \label{algo:framework:dl-label-2}
		\State return false;
		\EndIf
		\State Enqueue u for \bfs;
		\While{queue not empty}
		\State $w \gets $ pop queue;
		\For{vertex $ x \in Suc(w)$}
		\If{$x = v$}
		\State return true;
		\EndIf
		\If{$\dl\_Intersec$($u$,$x$)}    \label{algo:framework:dl-prune}
		\State continue;
		\EndIf
		\If{not $\bl\_Contain$($x$,$v$)}  \label{algo:framework:bl-prune}
		\State continue;
		\EndIf
		\State  Enqueue $x$;
		\EndFor
		\EndWhile
		\State \Return false;
		\EndProcedure
		
	\end{algorithmic}
	\label{algo:framework}
\end{algorithm}


\subsection{Query Processing}\label{sec:qp}

With the two indexes, Algorithm~\ref{algo:framework} illustrates the query processing framework of \sol.
Given a reachability query $q(u,v)$, we return the answer immediately if the labels are sufficient to determine the reachability (Lines \ref{algo:framework:dl-label}-\ref{algo:framework:bl-label}).
\YC{By the definitions of \dl and \bl labels, $u$ reaches $v$ if their \dl label overlaps (Line~\ref{algo:framework:dl-label}) where $u$ does not reach $v$ if their \bl label does not overlap (Line~\ref{algo:framework:bl-label}).}
Furthermore, there are two early termination rules implemented in Lines \ref{algo:framework:dl-label-1} and \ref{algo:framework:dl-label-2}, respectively.
\YC{
Line~\ref{algo:framework:dl-label-1} makes use of the properties that all vertices in a \scc contain at least one common landmark node. Line~\ref{algo:framework:dl-label-2} takes advantage of the scenario when either $u$ or $v$ share the same \scc with a landmark node $l$ then $u$ reaches $v$ if and only if $l$ appeared in the \dl label of $u$ and $v$.
}
We prove their correctness in Theorem 1 and Theorem 2 respectively. Otherwise, we turn to \bfs search with efficient pruning. The pruning within \bfs is performed as follows. Upon visiting a vertex $q$, the procedure will determine whether the vertex $q$ should be enqueued in Lines \ref{algo:framework:dl-prune} and \ref{algo:framework:bl-prune}. \bl and \dl labels will judge whether the destination vertex $v$ will be in the $Des(w)$. If not, $q$ will be pruned from \bfs
to quickly answer the query before traversing the graph with \bfs.

\begin{theorem}
In Algorithm~\ref{algo:framework}, when $\dl\_Intersec$($x$,$y$) returns false and $\dl\_Intersec$($y$,$x$) returns true, then $x$ cannot reach $y$.
\end{theorem}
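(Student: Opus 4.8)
The plan is to prove the contrapositive-flavoured statement directly: assume $\dl\_Intersec(x,y)$ returns false (so $\dl_{out}(x) \cap \dl_{in}(y) = \emptyset$) and $\dl\_Intersec(y,x)$ returns true (so there is a landmark $\ell \in \dl_{out}(y) \cap \dl_{in}(x)$), and derive that $x$ cannot reach $y$. First I would unpack what the hypothesis $\ell \in \dl_{out}(y) \cap \dl_{in}(x)$ means via the definition of the \dl label: $\ell \in \dl_{out}(y)$ says $y$ reaches $\ell$, and $\ell \in \dl_{in}(x)$ says $\ell$ reaches $x$. Hence $y$ reaches $x$ through $\ell$.

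Next I would argue by contradiction: suppose $x$ also reaches $y$. Combining $x \rightsquigarrow y$, $y \rightsquigarrow \ell$, and $\ell \rightsquigarrow x$, all four vertices $x, y, \ell$ lie on a common directed cycle, so in particular $x$, $y$ and $\ell$ belong to the same strongly connected component. The key structural fact I need is that reachability to/from a landmark propagates around an \scc: if $\ell$ is a landmark and $w$ is any vertex in the same \scc as $\ell$, then $\ell \in \dl_{in}(w)$ (because $\ell$ reaches $w$) and $\ell \in \dl_{out}(w)$ (because $w$ reaches $\ell$). This is exactly the ``all vertices in a \scc contain at least one common landmark node'' property the text invokes — here that common landmark is $\ell$ itself, and it sits in both the in- and out-labels of every \scc member, provided the batch construction (Algorithm~\ref{algo:appendix:dl}) and the incremental maintenance faithfully record reachability, which I may assume by the correctness of those procedures (Lemma~1 gives one direction; the construction gives completeness of the labels with respect to landmark reachability).

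Applying this to $x$: since $x$ is in the same \scc as $\ell$, we get $\ell \in \dl_{out}(x)$. Applying it to $y$: since $y$ is in the same \scc as $\ell$, we get $\ell \in \dl_{in}(y)$. Therefore $\ell \in \dl_{out}(x) \cap \dl_{in}(y)$, so this intersection is nonempty, contradicting the assumption that $\dl\_Intersec(x,y)$ returned false. Hence $x$ cannot reach $y$, which is what we wanted.

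The main obstacle I anticipate is justifying the \scc-label propagation claim cleanly: it rests on the \dl labels being \emph{complete} with respect to landmark reachability (every landmark that can reach a vertex actually appears in its $\dl_{in}$, and symmetrically for $\dl_{out}$), not merely sound. The batch construction in Algorithm~\ref{algo:appendix:dl} gives this by doing a full \bfs from each landmark, and the dynamic maintenance must preserve it; I would state this completeness invariant explicitly as the one ingredient the proof relies on, then the rest is a short chain of reachability implications. A minor point to get right is that the ``$\dl\_Intersec$ returns true'' hypothesis only guarantees a \emph{nonempty} intersection — picking a single witness landmark $\ell$ from it is all that is needed, and the argument does not depend on which witness is chosen.
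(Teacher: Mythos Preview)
Your proposal is correct and follows essentially the same argument as the paper: from $\dl\_Intersec(y,x)$ true conclude $y$ reaches $x$, then argue by contradiction that if $x$ also reached $y$ the two would share an \scc, forcing the witness landmark into $\dl_{out}(x)\cap\dl_{in}(y)$ and contradicting $\dl\_Intersec(x,y)$ false. Your write-up is in fact cleaner than the paper's in that you isolate the completeness invariant (every landmark reachable to/from a vertex actually appears in its \dl label) as the one nontrivial ingredient, whereas the paper leaves this implicit.
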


\begin{proof}
$\dl\_Intersec$($y$, $x$) returns true indicates that vertex $y$ reaches $x$. If vertex $x$ reaches vertex $y$, then $y$ and $x$ must be in the same \scc(according to the definition of the \scc). As all the vertices in the \scc are reachable to each other, the landmark nodes in $\dl_{out}(y) \cap \dl_{in}(x)$ should also be included in $\dl_{out}$ and $\dl_{in}$ label for all vertices in the same \scc. This means $\dl\_Intersec$($x$, $y$) should return true. Therefore $x$ cannot reach $y$ otherwise it contradicts with the fact that $\dl\_Intersec$($x$, $y$) returns false.
\end{proof}

\begin{theorem}
In Algorithm~\ref{algo:framework}, if $\dl\_Intersec$($x$, $y$) returns false and $\dl\_Intersec$($x$,$x$) or $\dl\_Intersec$($y$,$y$) returns true then vertex $x$ cannot reach $y$.
\end{theorem}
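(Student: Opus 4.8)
The plan is to argue by contradiction, closely mirroring the structure of the proof of Theorem 1. Suppose, toward a contradiction, that $\dl\_Intersec(x,y)$ returns false but $\dl\_Intersec(x,x)$ returns true, and yet $x$ reaches $y$. The case where $\dl\_Intersec(y,y)$ returns true is symmetric and I would handle it by the same reasoning applied to $y$, so it suffices to treat the first case.

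First I would unpack what $\dl\_Intersec(x,x)$ returning true means: there exists a landmark node $l \in \dl_{out}(x) \cap \dl_{in}(x)$, i.e., $x$ reaches $l$ and $l$ reaches $x$, so $x$ and $l$ lie in the same \scc. Next, using the assumption that $x$ reaches $y$, I would show $l$ also reaches $y$: indeed $l$ reaches $x$ (from $l \in \dl_{in}(x)$) and $x$ reaches $y$, so by transitivity $l$ reaches $y$; since $l$ is a landmark node, $l$ is added to $\dl_{in}(y)$ during construction (by the definition of \dl label and the batch construction in Algorithm~\ref{algo:appendix:dl}). Then $l \in \dl_{out}(x)$ (already known) and $l \in \dl_{in}(y)$, so $l \in \dl_{out}(x) \cap \dl_{in}(y)$, which means $\dl\_Intersec(x,y)$ returns true --- contradicting the hypothesis that it returns false. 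Hence $x$ cannot reach $y$.

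The key steps in order are: (1) reduce to the single case $\dl\_Intersec(x,x)$ true by symmetry; (2) extract the common landmark $l$ witnessing $\dl\_Intersec(x,x)$; (3) chain reachabilities $l \to x \to y$ to get $l \in \dl_{in}(y)$; (4) combine with $l \in \dl_{out}(x)$ to contradict $\dl\_Intersec(x,y) = \emptyset$. The main subtlety --- not really an obstacle, but the point that needs care --- is justifying step (3): one must appeal to the \dl label semantics (Definition of \dl label) to be sure that any landmark reachable-to/from a vertex is actually recorded in that vertex's label, i.e., that the labels are complete with respect to landmark nodes, which holds by the batch construction and is maintained as an invariant under updates. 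A secondary point worth stating explicitly is that the symmetry argument in step (1) is legitimate because the claim's hypotheses are not symmetric in $x$ and $y$ (we always conclude "$x$ cannot reach $y$"), so when $\dl\_Intersec(y,y)$ is the one returning true we instead derive $l' \in \dl_{out}(x)$ from $x \to y \to l'$ and $l' \in \dl_{in}(y)$, again populating $\dl_{out}(x) \cap \dl_{in}(y)$ and reaching the same contradiction.
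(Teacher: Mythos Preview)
Your proposal is correct and follows essentially the same approach as the paper: both proofs extract a landmark $l$ in the same \scc as $x$ (respectively $y$) from $\dl\_Intersec(x,x)$ (respectively $\dl\_Intersec(y,y)$) and then use the completeness of the \dl labels with respect to landmark nodes to argue that if $x$ reached $y$ this landmark would necessarily appear in $\dl_{out}(x)\cap\dl_{in}(y)$. The paper phrases this directly (``the reachability information for landmark $l$ will be fully covered'') while you frame it as a contradiction and spell out the transitivity chain explicitly, but the logical content is the same.
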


\begin{proof}
If $\dl\_Intersec$($x$,$x$) returns true, it means that vertex $x$ is a landmark or $x$ is in the same \scc with a landmark. If $x$ is in the same \scc with landmark $l$, vertex $x$ and vertex $l$ should have the same reachability information. As landmark $l$ will push its label element $l$ to $\dl_{out}$ label for all the vertices in $Anc(l)$ and to $\dl_{in}$ label for all the vertices in $Des(l)$. The reachability information for landmark $l$ will be fully covered. It means that $x$'s reachability information is also fully covered. Thus \dl label is enough to answer the query without \bfs. Hence $y$ is not reachable by $x$ if $\dl\_Intersec(x, y)$ returns false.
The proving process is similar for the case when $\dl\_Intersec(y,y)$ returns true.
\end{proof}

\begin{algorithm}[t]
	\caption{$\dl_{in}$ label update for edge insertion}
	\begin{algorithmic}[1]
		\Require
		Graph $G(V,E)$, \dl label, Inserted edge $(u,v)$
		\Ensure
		Updated \dl label
		\If{ $\dl_{out}(u) \cap \dl_{in}(v) == \emptyset$ }
		\State Initialize an empty queue and enqueue $v$
		\While{queue is not empty}
		\State $p \gets $ pop queue
		\For{vertex $ x \in Suc(p)$}
		\If{ $\dl_{in}(u) \not\subseteq  \dl_{in}(x)$ }
		\State $\dl_{in}(x) \gets \dl_{in}(x) \cup \dl_{in}(u)$
		\State enqueue x
		\EndIf
		\EndFor
		\EndWhile
		\EndIf
	\end{algorithmic}
	\label{algo:dl-insert}
\end{algorithm}

\noindent\textbf{Query complexity.}
Given a query $q(u,v)$,
the time complexity is $O(k+k')$ when the query can be directly answered by \dl and \bl labels. 
Otherwise, we turn to the pruned \bfs search, which has a worst case time complexity of $O((k+k')(m+n))$.
Let $\rho$ denote the ratio of vertex pairs whose reachability could be directly answered by the label. The amortized time complexity is $O(\rho(k+k')+(1-\rho)(k+k')(m+n))$.
Empirically, $\rho$ is over $95\%$ according to our experiments (Table~\ref{tab:per} in Section~\ref{sec:exp}), which implies efficient query processing.

\section{DL and BL Update for Edge Insertions}\label{sec:lan}




When inserting a new edge $(u,v)$,
all vertices in $Anc(u)$ can reach all vertices in $Des(v)$.
On a high level, all landmark nodes that could reach $u$ should also reach vertices in $Des(v)$. In other words, all the landmark nodes that could be reached by $v$ should also be reached by vertices in $Anc(u)$.
\xxx{
Thus, we update the label by
1) adding $\dl_{in}(u)$ into $\dl_{in}(x)$ for all $x \in Des(v)$; and
2) adding $\dl_{out}(v)$ into $\dl_{out}(x)$ for all $x \in Anc(u)$.}

Algorithm~\ref{algo:dl-insert} depicts the edge insertion scenario for $\dl_{in}$.
We omit the update for $\dl_{out}$, which is symmetrical to $\dl_{in}$.
If \dl label can determine that vertex $v$ is reachable by vertex $u$ in the original graph before the edge insertion, the insertion will not trigger any label update (Line~1).
Lines 2-8 describe a \bfs process with pruning. For a visited vertex $x$, we prune $x$ without traversing $Des(x)$ iff $\dl_{in}(u) \subseteq \dl_{in}(x)$, because all the vertices in $Des(x)$ are deemed to be unaffected as their $\dl_{in}$ labels are supersets of $\dl_{in}(x)$.

\begin{example}
Figure~\ref{fig:example4} shows an example of edge insertion. Figure~\ref{fig:dl-label-insert} shows the corresponding $\dl_{in}$ label update process. $\dl_{in}$ label is presented with brackets. Give an edge $(v_9,v_2)$ inserted, $\dl_{in}(v_9)$ is copied to $\dl_{in}(v_2)$. Then an inspection will be processed on $\dl_{in}(v_5)$ and $\dl_{in}(v_6)$. Since $\dl_{in}(v_9)$ is a subset of $\dl_{in}(v_5)$ and $\dl_{in}(v_6)$, vertex $v_5$ and vertex $v_6$ are pruned from the \bfs. The update progress is then terminated.
\end{example}

\begin{figure}[t]
	\subfigure[Insert edge $(v_9,v_2)$ ]{
	\begin{minipage}{0.30\textwidth}
        \centering
		\label{fig:example4}
		\includegraphics[height=3.6cm]{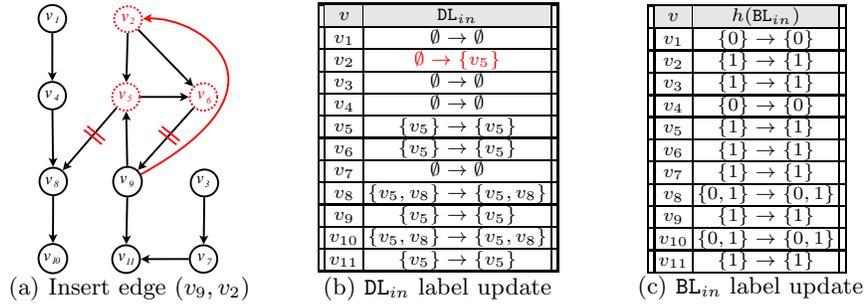}
	\end{minipage}
	}
	\hfill
	\subfigure[$\dl_{in}$ label update]{
	\begin{minipage}{0.26\textwidth}
    \scriptsize
		\centering
		\begin{tabular}{||c|c||}
			\hline
			\rowcolor[gray]{.9}	
			\hline
			$v$	    &	$\dl_{in}$		                                    \\ \hline
			$v_1$	&	$\emptyset	\rightarrow	\emptyset$	                \\ \hline
			$v_2$	&	\textcolor{red}{$\emptyset	\rightarrow	\{v_5\}$}		\\ \hline
			$v_3$	&	$\emptyset	\rightarrow	\emptyset$                 \\ \hline
			$v_4$	&	$\emptyset	\rightarrow	\emptyset$	                \\ \hline
			$v_5$	&	$\{v_5\}		\rightarrow \{v_5\}$	                    \\ \hline
			$v_6$	&	$\{v_5\}		\rightarrow	\{v_5\}$	                    \\ \hline
			$v_7$	&	$\emptyset	\rightarrow	\emptyset$	                \\ \hline
			$v_8$	&	$\{v_5,v_8\}	\rightarrow		\{v_5,v_8\}$		        \\ \hline
			$v_9$	&	$\{v_5\}		\rightarrow		\{v_5\}$	                \\ \hline
			$v_{10}$	&	$\{v_5,v_8\}	\rightarrow	\{v_5,v_8\}$	        	\\ \hline
			$v_{11}$	&	$\{v_5\}		\rightarrow	\{v_5\}$		            \\ \hline
		\end{tabular}
		\label{fig:dl-label-insert}
	\end{minipage}
	}
	\hfill
	\subfigure[$\bl_{in}$ label update]{
		\begin{minipage}{0.32\textwidth}
			\centering
            \scriptsize
			\begin{tabular}{||c|c||}
				\hline
				\rowcolor[gray]{.9}
				\hline
				$v$	&   $h(\bl_{in})$     \\ \hline
				$v_1$	  &	$\{0\}		\rightarrow	\{0\} $     \\ \hline
				$v_2$	  &	$\{1\}    	\rightarrow	\{1\} $     \\ \hline
				$v_3$	  &	$\{1\}    	\rightarrow    \{1\} $     \\ \hline
				$v_4$	  &	$\{0\}    	 \rightarrow   \{0\} $     \\ \hline
				$v_5$	  &	$\{1\}    	\rightarrow	\{1\} $     \\ \hline
				$v_6$	  &	$\{1\}    	  \rightarrow  \{1\} $	   \\ \hline
				$v_7$	  &	$\{1\}		\rightarrow	\{1\} $	   \\ \hline
				$v_8$	  &	$\{0,1\}	  \rightarrow  \{0,1\} $	   \\ \hline
				$v_9$	  &	$\{1\}	       \rightarrow	\{1\} $     \\ \hline
				$v_{10}$	&   $\{0,1\}   \rightarrow	\{0,1\}	$   \\ \hline
				$v_{11}$	&   $\{1\}		\rightarrow\{1\}	$   \\ \hline
			\end{tabular}
			\label{fig:bl-label-insert}
		\end{minipage}
	}

	\caption{Label update for inserting edge $(v_9,v_2)$}
	\label{exp:dl-insert}
	
\end{figure}

$\dl$ label only gives positive answer to a reachability query. In poorly connected graphs, $\dl$ will degrade to expensive \bfs search.
Thus, we employ the Bidirectional Leaf (\bl) label to complement $\dl$ and quickly identify vertex pairs which are not reachable. We omit the update algorithm of \bl, as they are very similar to those of \dl, except the updates are applied to $\bl_{in}$ and $\bl_{out}$ labels. Figure~\ref{fig:bl-label-insert} shows the update of $\bl_{in}$ label. Similar to the \dl label, the update process will be early terminated as the $\bl_{in}(v_2)$ is totally unaffected after edge insertion. Thus, no $\bl_{in}$ label will be updated in this case.

\vspace{1mm}
\noindent\textbf{Update complexity of \sol.}
In the worst case, all the vertices that reach or are reachable to the updating edges will be visited.
\YCL{Thus, the time complexity of \dl and \bl is $O((k+k')(m+n))$ where $(m+n)$ is the cost on the \bfs. }
Empirically, as the \bfs procedure will prune a large number of vertices, the actual update process is much more efficient than a plain \bfs.

\begin{table*}[t]
	\centering
	\caption{Dataset statistics}
	\label{tab:databsets}
	\scriptsize

	\begin{tabular}{ccccccccc}
		\hline
		\multirow{3}{*}{Dataset} & \multirow{3}{*}{$|V|$}  & \multirow{3}{*}{$|E|$}  & \multirow{3}{*}{$d_{avg}$}   &  \multirow{3}{*}{\textbf{Diameter}}& \multirow{3}{*}{\textbf{Connectivity}} &
		\multirow{3}{*}{\dagg-$|V|$}  & \multirow{3}{*}{\dagg-$|E|$}  & \dagg \\
		& & & & &      & & & \uconstruct \\
		& & & & & (\%) & & & (ms) \\
		\hline
		LJ        &4,847,571  &68,993,773 &14.23  &16     &78.9   &971,232    &1,024,140  &2368\\
		Web       &875,713    &5,105,039  &5.83   &21     &44.0   &371,764    &517,805    &191\\
		Email     &265,214    &420,045    &1.58   &14     &13.8   &231,000    &223,004    &17\\
		Wiki      &2,394,385  &5,021,410  &2.09   &9      &26.9   &2,281,879  &2,311,570  &360\\
		BerkStan  &685,231    &7,600,595  &11.09  &514    &48.8   &109,406    &583,771    &1134\\
		Pokec     &1,632,803  &30,622,564 &18.75  &11     &80.0   &325,892    &379,628    &86\\
		Twitter   &2,881,151  &6,439,178  &2.23   &24     &1.9    &2,357,437  &3,472,200  &481\\
		Reddit    &2,628,904  &57,493,332 &21.86  &15     &69.2   &800,001    &857,716    &1844\\
		\hline
	\end{tabular}
 
\end{table*}




\section{Experimental Evaluation}\label{sec:exp}

In this section, we conduct experiments by comparing the proposed \sol framework with the state-of-the-art approaches on reachability query for dynamic graphs.


\subsection{Experimental Setup}

\noindent\textbf{Environment:}
Our experiments are conducted on a server with an Intel Xeon CPU E5-2640 v4 2.4GHz, 256GB RAM and a Tesla P100 PCIe version GPU.

\noindent\textbf{Datasets:}
We conduct experiments on 8 real-world datasets (see Table~\ref{tab:databsets}).
We have collected the following datasets from SNAP \cite{snapnets}. LJ and Pokec are two social networks, which are power-law graphs in nature. BerkStan and Web are web graphs in which nodes represent web pages and directed edges represent hyperlinks between them. Wiki and Email are communication networks. Reddit and Twitter are two social network datasets obtained from \cite{wang2017real}.

\vspace{-2em}
\begin{table}[h]
	\centering
	\caption{\QY{Query time (ms) for different landmark nodes selection. A=$\max(|Pre(\cdot)|,|Suc(\cdot|))$; B=$\min(|Pre(\cdot)|,|Suc(\cdot|))$;
		C=$|Pre(\cdot)|+|Suc(\cdot)|$; D is the betweenness centrality; ours=$|Pre(\cdot)|\cdot|Suc(\cdot)|$}}
	\label{tab:landmark-rank}
	\begin{tabular}{cccccc}
		\hline
		Dataset  & A & B & C & D & ours  \\
		\hline
		LJ        &125.10      &127.84    &105.88     &113.34     &108.51  \\
		Web       &202.16      &144.13    &142.16     &140.79     &139.64     \\
		Email     &37.02       &37.01     &36.14      &38.53      &36.38   \\
		Wiki      &156.21      &159.74    &153.66     &155.45     &157.12          \\
		Pokec     &37.69       &64.57     &36.96      &50.66      &34.78   \\
		BerkStan  &1890        &6002      &1883       &1252       &1590  \\
		Twitter   &719.31      &849.78    &685.31     &727.59     &693.71  \\
		Reddit    &99.21       &65.06     &62.68      &69.62      &60.48  \\
		\hline
	\end{tabular}

\end{table}

\subsection{Label Node Selection}\label{exp:selection}
\QY{\dl select the landmark nodes by heuristically approximating the centrality of a vertex $u$ as $M(u) = |Pre(u)| \cdot |Suc(u)|$. Here, we evaluate different heuristic methods for landmark nodes selection. The results are shown in Table~\ref{tab:landmark-rank}. Overall, our adopted heuristic ($|Pre(u)| \cdot |Suc(u)|$) achieves the performance. For Email and Wiki, all the methods share a similar performance. $|Pre(\cdot)|+|Suc(\cdot)|$ and $|Pre(\cdot)|\cdot|Suc(\cdot)|$ get a better performance in other datasets. Finally, $|Pre(\cdot)|+|Suc(\cdot)|$(degree centrality) and $|Pre(\cdot)|\cdot|Suc(\cdot)|$ deliver similar performance for most datasets and the latter is superior in the Berkstan dataset.
Thus, we adopt $|Pre(\cdot)|\cdot|Suc(\cdot)|$ for approximating the centrality. It needs to mention that, although the betweenness centrality get a medium overall performance, it shows the best performance in BerkStan dataset.}

\QY{In the main body of this paper, we restrict the leaf nodes to be the ones with either zero in-degree or zero out-degree. Nevertheless, our proposed method does not require such a restriction and could potentially select any vertex as a leaf node. Following the approach for which we select \dl label nodes, we use  $M(u) = |Pre(u)| \cdot |Suc(u)|$ to approximate the centrality of vertex $u$ and select vertex $u$ as a \bl label node if $M(u) \leq r$ where $r$ is a tunning parameter. Assigning $r=0$ produces the special case presented in the main body of this paper. The algorithms for query processing as well as index update of the new \bl label remains unchanged. Figure~\ref{fig:bl-degree} shows the query performance of \sol when we vary the threshold $r$. With a higher $r$, more vertices are selected as the leaf nodes, which should theoretically improve the query processing efficiency. However, since we employ the hash function for \bl label, more leaf nodes lead to higher collision rates. This explains why we don't observe a significant improvement in query performance.}

\begin{figure}[htb]
	\centering
	\includegraphics[width=0.5\textwidth,height=5cm]{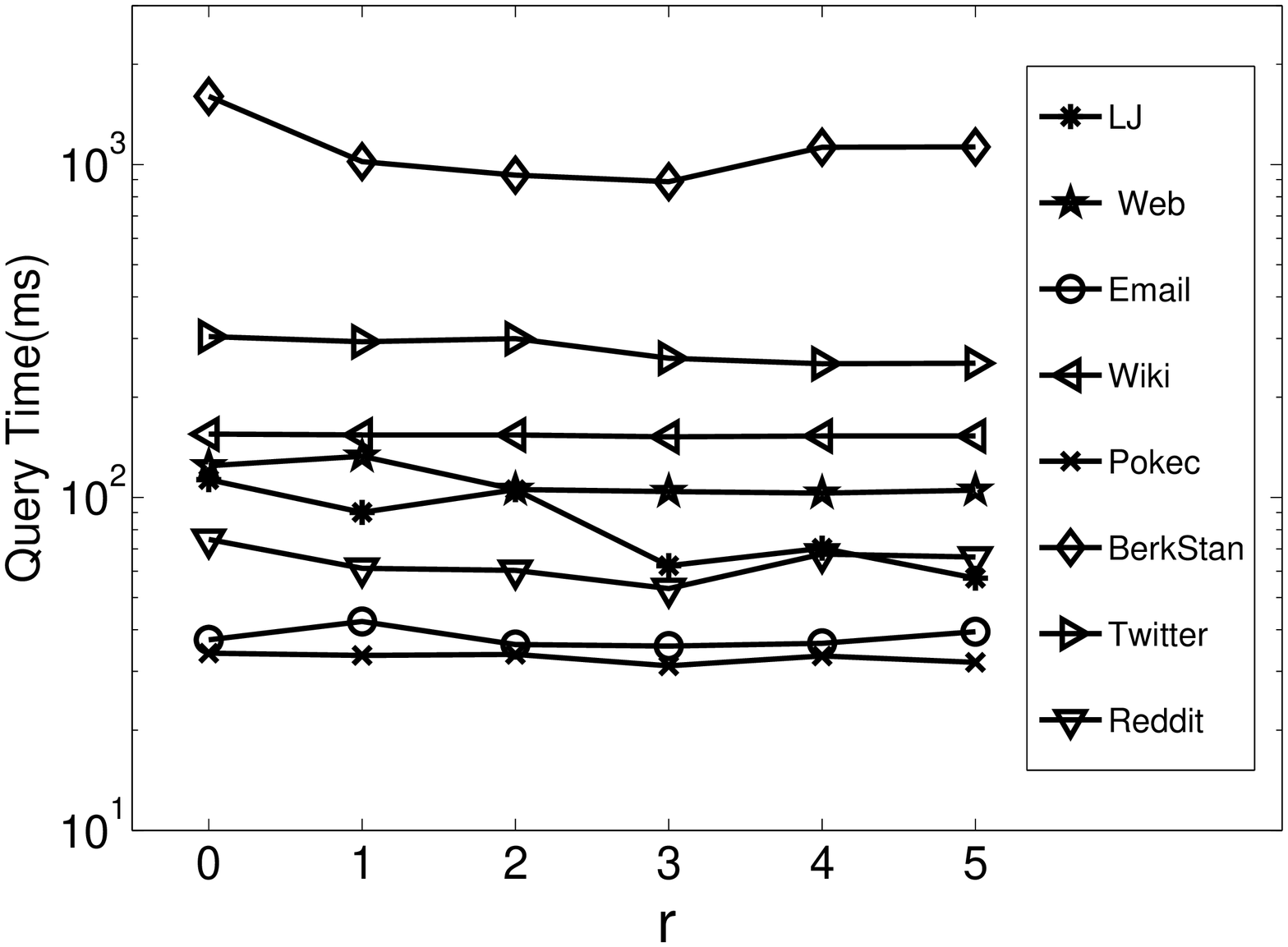}
	\caption{\QY{\bl label node selection}}
	\label{fig:bl-degree}
\end{figure}
\vspace{-2em}

\subsection{Effectiveness of DL+BL}\label{sec:exp:parameters}
Table~\ref{tab:percentage} shows the percentages of queries answered by \dl label, \bl label (\emph{when the other label is disabled}) and \sol label. All the queries are randomly generated. The results show that \dl is effective for dense and highly connected graphs (LJ, Pokec and Reddit) whereas \bl is effective for sparse and poorly connected graphs (Email, Wiki and Twitter). However, we still incur the expensive \bfs if the label is disabled. By combining the merits of both indexes, our proposal leads to a significantly better performance. \NQY{\sol could answer much more queries than \dl and \bl label}. The results have validated our claim that \dl and \bl are complementary to each other.
\YC{We note that the query processing for \sol is able to handle one million queries with sub-second latency for most datasets, which shows outstanding performance.}

\noindent\textbf{Impact of Label Size:} On the query processing of \sol.
There are two labels in \sol: both \dl and \bl store labels in bit vectors.
The size of \dl label depends on the number of selected landmark nodes whereas the size of \bl label is determined by how many hash values are chosen to index the leaf nodes. We evaluate all the datasets to show the performance trend of varying \dl and \bl label sizes per vertex (by processing 1 million queries) in Table~\ref{tab:bldl}.

When varying \dl label size $k$, the performance of most datasets remain stable before a certain size (e.g., 64) and deteriorates thereafter. This means that extra landmark nodes will cover little extra reachability information. Thus, selecting more landmark nodes does not necessarily lead to better overall performance since the cost of processing the additional bits incur additional cache misses. BerkStan gets benefit from increasing the \dl label size to 128 since 64 landmarks are not enough to cover enough reachable pairs.

\begin{table}[t]
		\captionof{table}{Percentages of queries answered by \dl label, \bl label (when the other label is disabled) and \sol label respectively. We also include the time for \sol to process 1 million queries}
        \centering
        \label{tab:per}
		{
			\begin{tabular}{ccccc}
				\hline
				Dataset  & \dl Label & \bl Label & \sol Label & \sol time  \\
				\hline
				LJ      &97.5\%    &20.8\%    &99.8\%  	&  	108ms	\\
				Web     &79.5\%    &54.3\%    &98.3\% 	&	139ms	\\
				Email   &31.9\%    &85.4\%    &99.2\% 	&	36ms	\\
				Wiki    &10.6\%    &94.3\%    &99.6\% 	&	157ms	\\
				Pokec   &97.6\%    &19.9\%    &99.9\% 	&	35ms	\\
				BerkStan    &87.5\%    &43.3\%  &95.0\% &	1590ms	\\
				Twitter &6.6\%     &94.8\%    &96.7\% 	&	709ms	\\
				Reddit  &93.7\%    &30.6\%    &99.9\% 	&	61ms	\\
				\hline
			\end{tabular}\label{tab:percentage}
		}
\end{table}

Compared with \dl label, some of the datasets get a sweet spot when varying the size of \bl label. This is because there are two conflicting factors which affect the overall performance. With increasing \bl label size and more hash values incorporated, we can quickly prune more unreachable vertex pairs by examining \bl label without traversing the graph with \bfs. Besides, larger \bl size also provides better pruning power of the \bfs even if it fails to directly answer the query (Algorithm~\ref{algo:framework}). Nevertheless, the cost of label processing increases with increased \bl label size. According to our parameter study, we set Wiki's \dl and \bl label size as 64 and 256, BerkStan's \dl and \bl label size as 128 and 64. For the remaining datasets, both \dl and \bl label sizes are set as 64.

    \begin{table*}[b]
    \vspace{-2em}
    \scriptsize
    \subfigure[Varying \bl label sizes]{
    \begin{minipage}{0.43\textwidth}
	\label{tab:bl}
	\begin{tabular}{|c|c|c|c|c|c|}
		\hline
		Dataset &16 &32 &64 &128 & 256 \\
        \hline
		LJ    &136.1     &131.9   &108.1   &107.4     &110.3     \\
		Web   &177.2     &128.5   &152.9   &156.6     &174.3        \\
		Email &77.4      &53.9    &38.3    &41.1      &44.4       \\
		Wiki  &911.6     &481.4   &273.7   &181.3     &157.4     \\
		Pokec &54.8      &43.7    &38.6    &40.6      &53.6      \\
		BerkStan&4876.1  &4958.9    &4862.9    &5099.1    &5544.3   \\
		Twitter&1085.3   &845.7     &708.2     &652.7     &673.2  \\
		Reddit&117.1     &80.4      &67.3      &63.5      &67.9  \\
		\hline
	\end{tabular}
    \end{minipage}
    }
	\hfill
    \subfigure[Varying \dl label sizes]{
    \begin{minipage}{0.47\textwidth}
	\centering
	\label{tab:dl}
	\begin{tabular}{|c|c|c|c|c|c|}
		\hline
		Dataset &16 &32 &64 &128 & 256 \\
        \hline
		LJ    &108.2     &110.3   &106.9   &120.2     &125.5     \\
		Web   &154.0     &152.5   &151.1   &158.8     &167.8        \\
		Email &37.9      &39.5    &35.8    &39.8      &43.7       \\
		Wiki  &274.5     &282.6   &272.4   &274.8     &281.1     \\
		Pokec &38.1      &40.6    &36.3    &49.7      &55.6      \\
		BerkStan&6369.8  &5853.1    &4756.3    &1628.3    &1735.2   \\
		Twitter&716.1   &724.4    &695.3     &707.1     &716.9  \\
		Reddit&64.6     &65.9      &62.9      &75.4      &81.4  \\
		\hline
	\end{tabular}
    \end{minipage}
    }
    \caption{Query performance(ms) with varying \dl and \bl label sizes}
    \label{tab:bldl}
    \end{table*}

\subsection{General Graph Updates} \label{sec:exp:general}

In this section, we evaluate \sol's performance on general graph update. As \dgr is the only method that could handle general update, we compare \sol against \dgr in Figure~\ref{fig:insdel}. Ten thousand edge insertion and 1 million queries are randomly generated and performed, respectively. Different from \dgr, \sol don't need to maintain the \dagg, thus, in all the datasets, \sol could achieve great performance lift compared with \dgr. For both edge insertion and query, \sol is orders of magnitude faster than \dgr. The minimum performance gap lies in BerkStan. This is because BerkStan has a large diameter. As \sol rely on \bfs traversal to update the index. The traversal overheads is crucial for it's performance. BerkStan's diameter is large, it means, during index update, \sol need to traversal extra hops to update the index which will greatly degrade the performance.

\begin{figure}[t]
\centering
    \includegraphics[width=0.9\linewidth]{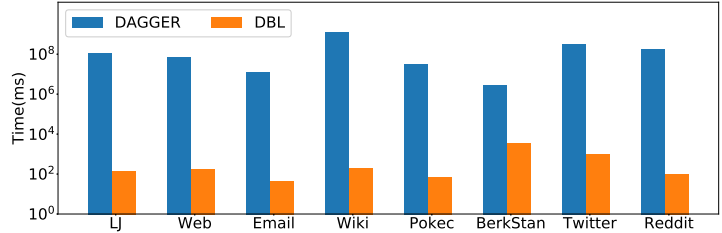}
	\caption{The execution time for insert 10000 edges as well as 1 million queries}
	\label{fig:insdel}
\end{figure}

\subsection{Synthetic graph updates} \label{sec:exp:compare}
In this section, we compare our method with \ip and \tol. Different from \sol, which could handle real world update, \ip and \tol could only handle synthetic edge update that will not trigger \dagg maintaining. Thus, for \ip and \tol, we follow their experimental setups depict in their paper\cite{zhu2014reachability,wei2018reachability}.
Specifically, we randomly select 10,000 edges from the \dagg and delete them. Then, we will insert the same edges back. In this way, we could get the edge insertion performance without trigger \dagg maintenance. For \sol, we stick to general graph updates. The edge insertion will be randomly generated and performed. One million queries will be executed after that. It needs to be noted that, although both \ip and \tol claim they can handle dynamic graph, due to their special pre-condition, their methods are in fact of limited use in real world scenario.

\HBS{The results are shown in Figure~\ref{fig:sccdi}. \sol outperforms other baselines in most cases except on three data sets (Wiki, BerkStan and Twitter) where \ip could achieve a better performance. Nevertheless, \sol outperforms \ip and \tol by 4.4x and 21.2x, respectively with respect to geometric mean performance.}
We analyze the reason that \sol can be slower than \ip on Wiki, BerkStan and Twitter. As we aforementioned, \sol relies on the pruned \bfs to update the index, the \bfs traversal speed will determine the worst-case update performance. Berkstan has the largest diameter as 514 and Twitter has the second largest diameter as 24, which dramatically degrade the update procedure in \sol. For Wiki, \sol could still achieve a better update performance than \ip. However, \ip is much more efficiency in query processing which lead to better overall performance.

Although this experimental scenario has been used in previous studies, the comparison is unfair for \sol. As both \ip and \tol rely on the \dagg to process queries and updates, \textbf{their synthetic update exclude the \dagg maintaining procedure/overheads from the experiments. }However, \dagg maintenance is essential for their method to handle real world edge updates, as we have shown in Figure~\ref{fig:insdel}, the overheads is nonnegligible.

\begin{figure}[t]
\centering
    \includegraphics[width=0.9\linewidth]{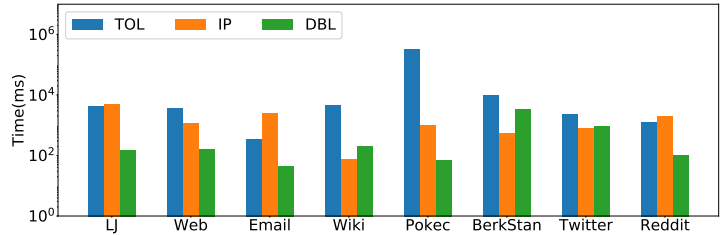}
	\caption{The execution time for insert 10000 edges as well as 1 million queries, for \tol and \ip, the updates are synthetic that will not trigger \scc update}
	\label{fig:sccdi}
\end{figure}

\subsection{Parallel Performance}
We implement \sol with OpenMP and CUDA (\solp and \solg respectively)
to demonstrate the deployment on multi-core CPUs and GPUs achieves encouraging speedup for query processing.
We follow existing GPU-based graph processing pipeline by batching the queries and updates~\cite{guo2017parallel,sha2017accelerating,li2021gpu}.
Note that the transfer time can be overlapped with GPU processing to minimize data communication costs.
Both CPU and GPU implementations are based on the vertex centric framework.


To validate the scalability of the parallel approach, we vary the number of threads used in \solp and show its performance trend in Figure~\ref{fig:scalability}.
\solp achieves almost linear scalability against increasing number of threads (note that the y-axis is plotted in log-scale).
The linear trend of scalability tends to disappear when the number of threads is beyond 14. We attribute this observation as the memory bandwidth bound nature of the processing tasks.
\sol invokes the \bfs traversal once the labels are unable to answer the query and the efficiency of the \bfs is largely bounded by CPU memory bandwidth.
This memory bandwidth bound issue of CPUs can be resolved by using GPUs which provide memory bandwidth boost.


The compared query processing performance is shown in Table~\ref{tab:gpus}. Bidirectional \bfs(\bbfs) query is listed as a baseline. We also compare our parallel solutions with a \HBS{home-grown} OpenMP implementation of \ip (denoted as \ipp). Twenty threads are used in the OpenMp implementation. We note that \ip has to invoke a pruned \dfs if its labels fail to determine the query result.
\dfs is a sequential process in nature and cannot be efficiently parallelized.
For our parallel implementation \ipp, we assign a thread to handle one query. We have the following observations.

First, \sol is built on the pruned BFS which can be efficiently parallelized with the vertex-centric paradigm. We have observed significant performance improvement by parallelized executions. \solp (CPUs) gets 4x to 10x speedup across all datasets. \solg (GPUs) shows an even better performance. In contrast, as \dfs incurs frequent random accesses in \ipp, the performance is bounded by memory bandwidth. Thus, parallelization does not bring much performance gain to \ipp compared with its sequential counterpart.

Second, \sol provides competitive efficiency against \ipp but \sol can be slower than \tol and \ip when comparing the single thread performance.
However, this is achieved by assuming the \dagg structure but the \dagg-based approaches incur prohibitively high cost of index update, as we demonstrated in the previous subsections. In contrast, \sol achieves sub-second query processing performance for handling 1 million queries while still support efficient updates without using the \dagg.

Third, there are cases where \solp outperforms \solg, i.e., Web, Berkstan and Twitter. This is because these datasets have a higher diameter than the rest of the datasets and the pruned \bfs needs to traverse extra hops to determine the reachability. Thus, we incur more random accesses, which do not suit the GPU architecture.

\begin{figure*}[t]
\begin{minipage}[h]{0.5\linewidth}
	\centering
	\includegraphics[width=1\linewidth]{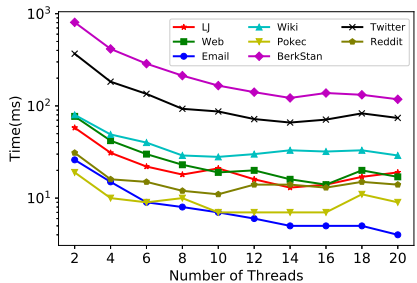}
	\vspace{-3em}
    \figcaption{Scalability of \sol on CPU}
	\label{fig:scalability}

\end{minipage}
\hfill
\begin{minipage}[h]{0.50\textwidth}
    \scriptsize
	\centering
    \vspace{2em}
	\begin{tabular}{cccccccc}
		\hline
		Dataset  &\tol &\ip &\ipp & \sol & \solp & \solg &\bbfs \\
		\hline
		LJ&46.6&50.7&24.9&108.1&16.4&\textbf{6.1}&555561\\
		Web&40.6&39.7&22.6&139.2&\textbf{12.4}&14.2&236892 \\
		Email&26.8&21.6&9.4&36.4&4.1&\textbf{2.8}&10168 \\
		Wiki&74.9&12.7&\textbf{4.1}&157.2&28.4&14.8&61113 \\
		Pokec&27.2&37.6&23.2&34.8&9.0&\textbf{3.1}&253936 \\
		BerkStan&37.2&31.6&\textbf{16.4}&1590.0&131.0&835.1&598127 \\
		Twitter&64.6&30.2&\textbf{7.1}&709.1 &79.4 &202.1&78496 \\
		Reddit&56.7&44.7&19.56&61.2 &14.6 &\textbf{3.1}&273935 \\
		\hline
	\end{tabular}
\tabcaption{The query performance(ms) on CPU and GPU architectures. B-BFS means the bidirectional BFS}
	\label{tab:gpus}
\end{minipage}

\end{figure*}

\section{Conclusion}\label{sec:con}
In this work, we propose \sol, an indexing framework to support dynamic reachability query processing on incremental graphs.
To our best knowledge, \sol is the first solution which avoids maintaining \dagg structure to construct and build reachability index.
\sol leverages two complementary index components: \dl and \bl labels.
\dl label is built on the landmark nodes to determine reachable vertex pairs that connected by the landmarks, whereas \bl label prunes unreachable pairs by examining their reachability information on the leaf nodes in the graph. The experimental evaluation has demonstrated that the sequential version of \sol outperforms the state-of-the-art solutions with orders of magnitude speedups in terms of index update while exhibits competitive query processing performance. The parallel implementation of \sol on multi-cores and GPUs further boost the performance over our sequential implementation. As future work, we are interested in extending DBL to support deletions, which will be lazily supported in many applications. 

\noindent\textbf{Acknowledgement.}
Yuchen Li’s work was supported by the Ministry of Education, Singapore, under its Academic Research Fund Tier 2 (Award No.:
MOE2019-T2-2-065).

\bibliographystyle{splncs04}
\bibliography{reference}

\end{document}